\newcommand{\safemath}[2]{\newcommand{#1}{\ensuremath{#2}\xspace}}
\safemath{\bma}{\mathbf{a}}
\safemath{\bmb}{\mathbf{b}}
\safemath{\bmc}{\mathbf{c}}
\safemath{\bmd}{\mathbf{d}}
\safemath{\bme}{\mathbf{e}}
\safemath{\bmf}{\mathbf{f}}
\safemath{\bmg}{\mathbf{g}}
\safemath{\bmh}{\mathbf{h}}
\safemath{\bmi}{\mathbf{i}}
\safemath{\bmj}{\mathbf{j}}
\safemath{\bmk}{\mathbf{k}}
\safemath{\bml}{\mathbf{l}}
\safemath{\bmm}{\mathbf{m}}
\safemath{\bmn}{\mathbf{n}}
\safemath{\bmo}{\mathbf{o}}
\safemath{\bmp}{\mathbf{p}}
\safemath{\bmq}{\mathbf{q}}
\safemath{\bmr}{\mathbf{r}}
\safemath{\bms}{\mathbf{s}}
\safemath{\bmt}{\mathbf{t}}
\safemath{\bmu}{\mathbf{u}}
\safemath{\bmv}{\mathbf{v}}
\safemath{\bmw}{\mathbf{w}}
\safemath{\bmx}{\mathbf{x}}
\safemath{\bmy}{\mathbf{y}}
\safemath{\bmz}{\mathbf{z}}
\safemath{\bmzero}{\mathbf{0}}
\safemath{\bmone}{\mathbf{1}}
\bmdefine{\biad}{a}
\bmdefine{\bibd}{b}
\bmdefine{\bicd}{c}
\bmdefine{\bidd}{d}
\bmdefine{\bied}{e}
\bmdefine{\bifd}{f}
\bmdefine{\bigd}{g}
\bmdefine{\bihd}{h}
\bmdefine{\biid}{i}
\bmdefine{\bijd}{j}
\bmdefine{\bikd}{k}
\bmdefine{\bild}{l}
\bmdefine{\bimd}{m}
\bmdefine{\bind}{n}
\bmdefine{\biod}{o}
\bmdefine{\bipd}{p}
\bmdefine{\biqd}{q}
\bmdefine{\bird}{r}
\bmdefine{\bisd}{s}
\bmdefine{\bitd}{t}
\bmdefine{\biud}{u}
\bmdefine{\bivd}{v}
\bmdefine{\biwd}{w}
\bmdefine{\bixd}{x}
\bmdefine{\biyd}{y}
\bmdefine{\bizd}{z}
\bmdefine{\bixid}{\xi}
\bmdefine{\bilambdad}{\lambda}
\bmdefine{\bimud}{\mu}
\bmdefine{\bithetad}{\theta}
\bmdefine{\biphid}{\phi}
\bmdefine{\bideltad}{\delta}
\safemath{\bmia}{\biad}
\safemath{\bmib}{\bibd}
\safemath{\bmic}{\bicd}
\safemath{\bmid}{\bidd}
\safemath{\bmie}{\bied}
\safemath{\bmif}{\bifd}
\safemath{\bmig}{\bigd}
\safemath{\bmih}{\bihd}
\safemath{\bmii}{\biid}
\safemath{\bmij}{\bijd}
\safemath{\bmik}{\bikd}
\safemath{\bmil}{\bild}
\safemath{\bmim}{\bimd}
\safemath{\bmin}{\bind}
\safemath{\bmio}{\biod}
\safemath{\bmip}{\bipd}
\safemath{\bmiq}{\biqd}
\safemath{\bmir}{\bird}
\safemath{\bmis}{\bisd}
\safemath{\bmit}{\bitd}
\safemath{\bmiu}{\biud}
\safemath{\bmiv}{\bivd}
\safemath{\bmiw}{\biwd}
\safemath{\bmix}{\bixd}
\safemath{\bmiy}{\biyd}
\safemath{\bmiz}{\bizd}
\safemath{\bmxi}{\bixid}
\safemath{\bmlambda}{\bilambdad}
\safemath{\bmmu}{\bimud}
\safemath{\bmtheta}{\bithetad}
\safemath{\bmphi}{\biphid}
\safemath{\bmdelta}{\bideltad}
\safemath{\bA}{\mathbf{A}}
\safemath{\bB}{\mathbf{B}}
\safemath{\bC}{\mathbf{C}}
\safemath{\bD}{\mathbf{D}}
\safemath{\bE}{\mathbf{E}}
\safemath{\bF}{\mathbf{F}}
\safemath{\bG}{\mathbf{G}}
\safemath{\bH}{\mathbf{H}}
\safemath{\bI}{\mathbf{I}}
\safemath{\bJ}{\mathbf{J}}
\safemath{\bK}{\mathbf{K}}
\safemath{\bL}{\mathbf{L}}
\safemath{\bM}{\mathbf{M}}
\safemath{\bN}{\mathbf{N}}
\safemath{\bO}{\mathbf{O}}
\safemath{\bP}{\mathbf{P}}
\safemath{\bQ}{\mathbf{Q}}
\safemath{\bR}{\mathbf{R}}
\safemath{\bS}{\mathbf{S}}
\safemath{\bT}{\mathbf{T}}
\safemath{\bU}{\mathbf{U}}
\safemath{\bV}{\mathbf{V}}
\safemath{\bW}{\mathbf{W}}
\safemath{\bX}{\mathbf{X}}
\safemath{\bY}{\mathbf{Y}}
\safemath{\bZ}{\mathbf{Z}}
\safemath{\bZero}{\mathbf{0}}
\safemath{\bOne}{\mathbf{1}}
\safemath{\bDelta}{\mathbf{\Delta}}
\safemath{\bLambda}{\mathbf{\UpLambda}}
\safemath{\bPhi}{\mathbf{\Upphi}}
\safemath{\bSigma}{\mathbf{\Upsigma}}
\safemath{\bOmega}{\mathbf{\Upomega}}
\safemath{\bTheta}{\mathbf{\Uptheta}}
\bmdefine{\biAd}{A}
\bmdefine{\biBd}{B}
\bmdefine{\biCd}{C}
\bmdefine{\biDd}{D}
\bmdefine{\biEd}{E}
\bmdefine{\biFd}{F}
\bmdefine{\biGd}{G}
\bmdefine{\biHd}{H}
\bmdefine{\biId}{I}
\bmdefine{\biJd}{J}
\bmdefine{\biKd}{K}
\bmdefine{\biLd}{L}
\bmdefine{\biMd}{M}
\bmdefine{\biOd}{N}
\bmdefine{\biPd}{O}
\bmdefine{\biQd}{P}
\bmdefine{\biRd}{R}
\bmdefine{\biSd}{S}
\bmdefine{\biTd}{T}
\bmdefine{\biUd}{U}
\bmdefine{\biVd}{V}
\bmdefine{\biWd}{W}
\bmdefine{\biXd}{X}
\bmdefine{\biYd}{Y}
\bmdefine{\biZd}{Z}
\bmdefine{\biDelta}{\Delta}
\bmdefine{\biLambda}{\Lambda}
\bmdefine{\biPhi}{\Phi}
\bmdefine{\biSigma}{\Sigma}
\bmdefine{\biOmega}{\Omega}
\bmdefine{\biTheta}{\Theta}
\safemath{\bimA}{\biAd}
\safemath{\bimB}{\biBd}
\safemath{\bimC}{\biCd}
\safemath{\bimD}{\biDd}
\safemath{\bimE}{\biEd}
\safemath{\bimF}{\biFd}
\safemath{\bimG}{\biGd}
\safemath{\bimH}{\biHd}
\safemath{\bimI}{\biId}
\safemath{\bimJ}{\biJd}
\safemath{\bimK}{\biKd}
\safemath{\bimL}{\biLd}
\safemath{\bimM}{\biMd}
\safemath{\bimN}{\biNd}
\safemath{\bimO}{\biOd}
\safemath{\bimP}{\biPd}
\safemath{\bimQ}{\biQd}
\safemath{\bimR}{\biRd}
\safemath{\bimS}{\biSd}
\safemath{\bimT}{\biTd}
\safemath{\bimU}{\biUd}
\safemath{\bimV}{\biVd}
\safemath{\bimW}{\biWd}
\safemath{\bimX}{\biXd}
\safemath{\bimY}{\biYd}
\safemath{\bimZ}{\biZd}
\safemath{\bimDelta}{\biDelta}
\safemath{\bimLambda}{\biLambda}
\safemath{\bimPhi}{\biPhi}
\safemath{\bimSigma}{\biSigma}
\safemath{\bimOmega}{\biOmega}
\safemath{\bimTheta}{\biTheta}
\safemath{\setA}{\mathcal{A}}
\safemath{\setB}{\mathcal{B}}
\safemath{\setC}{\mathcal{C}}
\safemath{\setD}{\mathcal{D}}
\safemath{\setE}{\mathcal{E}}
\safemath{\setF}{\mathcal{F}}
\safemath{\setG}{\mathcal{G}}
\safemath{\setH}{\mathcal{H}}
\safemath{\setI}{\mathcal{I}}
\safemath{\setJ}{\mathcal{J}}
\safemath{\setK}{\mathcal{K}}
\safemath{\setL}{\mathcal{L}}
\safemath{\setM}{\mathcal{M}}
\safemath{\setN}{\mathcal{N}}
\safemath{\setO}{\mathcal{O}}
\safemath{\setP}{\mathcal{P}}
\safemath{\setQ}{\mathcal{Q}}
\safemath{\setR}{\mathcal{R}}
\safemath{\setS}{\mathcal{S}}
\safemath{\setT}{\mathcal{T}}
\safemath{\setU}{\mathcal{U}}
\safemath{\setV}{\mathcal{V}}
\safemath{\setW}{\mathcal{W}}
\safemath{\setX}{\mathcal{X}}
\safemath{\setY}{\mathcal{Y}}
\safemath{\setZ}{\mathcal{Z}}
\safemath{\emptySet}{\varnothing}
\safemath{\colA}{\mathscr{A}}
\safemath{\colB}{\mathscr{B}}
\safemath{\colC}{\mathscr{C}}
\safemath{\colD}{\mathscr{D}}
\safemath{\colE}{\mathscr{E}}
\safemath{\colF}{\mathscr{F}}
\safemath{\colG}{\mathscr{G}}
\safemath{\colH}{\mathscr{H}}
\safemath{\colI}{\mathscr{I}}
\safemath{\colJ}{\mathscr{J}}
\safemath{\colK}{\mathscr{K}}
\safemath{\colL}{\mathscr{L}}
\safemath{\colM}{\mathscr{M}}
\safemath{\colN}{\mathscr{N}}
\safemath{\colO}{\mathscr{O}}
\safemath{\colP}{\mathscr{P}}
\safemath{\colQ}{\mathscr{Q}}
\safemath{\colR}{\mathscr{R}}
\safemath{\colS}{\mathscr{S}}
\safemath{\colT}{\mathscr{T}}
\safemath{\colU}{\mathscr{U}}
\safemath{\colV}{\mathscr{V}}
\safemath{\colW}{\mathscr{W}}
\safemath{\colX}{\mathscr{X}}
\safemath{\colY}{\mathscr{Y}}
\safemath{\colZ}{\mathscr{Z}}
\safemath{\opA}{\mathbb{A}}
\safemath{\opB}{\mathbb{B}}
\safemath{\opC}{\mathbb{C}}
\safemath{\opD}{\mathbb{D}}
\safemath{\opE}{\mathbb{E}}
\safemath{\opF}{\mathbb{F}}
\safemath{\opG}{\mathbb{G}}
\safemath{\opH}{\mathbb{H}}
\safemath{\opI}{\mathbb{I}}
\safemath{\opJ}{\mathbb{J}}
\safemath{\opK}{\mathbb{K}}
\safemath{\opL}{\mathbb{L}}
\safemath{\opM}{\mathbb{M}}
\safemath{\opN}{\mathbb{N}}
\safemath{\opO}{\mathbb{O}}
\safemath{\opP}{\mathbb{P}}
\safemath{\opQ}{\mathbb{Q}}
\safemath{\opR}{\mathbb{R}}
\safemath{\opS}{\mathbb{S}}
\safemath{\opT}{\mathbb{T}}
\safemath{\opU}{\mathbb{U}}
\safemath{\opV}{\mathbb{V}}
\safemath{\opW}{\mathbb{W}}
\safemath{\opX}{\mathbb{X}}
\safemath{\opY}{\mathbb{Y}}
\safemath{\opZ}{\mathbb{Z}}
\safemath{\opZero}{\mathbb{O}}
\safemath{\identityop}{\opI}
\safemath{\veca}{\bma}
\safemath{\vecb}{\bmb}
\safemath{\vecc}{\bmc}
\safemath{\vecd}{\bmd}
\safemath{\vece}{\bme}
\safemath{\vecf}{\bmf}
\safemath{\vecg}{\bmg}
\safemath{\vech}{\bmh}
\safemath{\veci}{\bmi}
\safemath{\vecj}{\bmj}
\safemath{\veck}{\bmk}
\safemath{\vecl}{\bml}
\safemath{\vecm}{\bmm}
\safemath{\vecn}{\bmn}
\safemath{\veco}{\bmo}
\safemath{\vecp}{\bmp}
\safemath{\vecq}{\bmq}
\safemath{\vecr}{\bmr}
\safemath{\vecs}{\bms}
\safemath{\vect}{\bmt}
\safemath{\vecu}{\bmu}
\safemath{\vecv}{\bmv}
\safemath{\vecw}{\bmw}
\safemath{\vecx}{\bmx}
\safemath{\vecy}{\bmy}
\safemath{\vecz}{\bmz}
\safemath{\veczero}{\bmzero}
\safemath{\vecone}{\bmone}
\safemath{\vecxi}{\bmxi}
\safemath{\veclambda}{\bmlambda}
\safemath{\vecmu}{\bmmu}
\safemath{\vectheta}{\bmtheta}
\safemath{\vecphi}{\bmphi}
\safemath{\vecdelta}{\bmdelta}
\safemath{\matA}{\bA}
\safemath{\matB}{\bB}
\safemath{\matC}{\bC}
\safemath{\matD}{\bD}
\safemath{\matE}{\bE}
\safemath{\matF}{\bF}
\safemath{\matG}{\bG}
\safemath{\matH}{\bH}
\safemath{\matI}{\bI}
\safemath{\matJ}{\bJ}
\safemath{\matK}{\bK}
\safemath{\matL}{\bL}
\safemath{\matM}{\bM}
\safemath{\matN}{\bN}
\safemath{\matO}{\bO}
\safemath{\matP}{\bP}
\safemath{\matQ}{\bQ}
\safemath{\matR}{\bR}
\safemath{\matS}{\bS}
\safemath{\matT}{\bT}
\safemath{\matU}{\bU}
\safemath{\matV}{\bV}
\safemath{\matW}{\bW}
\safemath{\matX}{\bX}
\safemath{\matY}{\bY}
\safemath{\matZ}{\bZ}
\safemath{\matzero}{\bmzero}
\safemath{\matDelta}{\bDelta}
\safemath{\matLambda}{\bLambda}
\safemath{\matPhi}{\bPhi}
\safemath{\matSigma}{\bSigma}
\safemath{\matOmega}{\bOmega}
\safemath{\matTheta}{\bTheta}
\safemath{\matidentity}{\matI}
\safemath{\matone}{\matO}
\safemath{\rnda}{A}
\safemath{\rndb}{B}
\safemath{\rndc}{C}
\safemath{\rndd}{D}
\safemath{\rnde}{E}
\safemath{\rndf}{F}
\safemath{\rndg}{G}
\safemath{\rndh}{H}
\safemath{\rndi}{I}
\safemath{\rndj}{J}
\safemath{\rndk}{K}
\safemath{\rndl}{L}
\safemath{\rndm}{M}
\safemath{\rndn}{N}
\safemath{\rndo}{O}
\safemath{\rndp}{P}
\safemath{\rndq}{Q}
\safemath{\rndr}{R}
\safemath{\rnds}{S}
\safemath{\rndt}{T}
\safemath{\rndu}{U}
\safemath{\rndv}{V}
\safemath{\rndw}{W}
\safemath{\rndx}{X}
\safemath{\rndy}{Y}
\safemath{\rndz}{Z}
\safemath{\rveca}{\bimA}
\safemath{\rvecb}{\bimB}
\safemath{\rvecc}{\bimC}
\safemath{\rvecd}{\bimD}
\safemath{\rvece}{\bimE}
\safemath{\rvecf}{\bimF}
\safemath{\rvecg}{\bimG}
\safemath{\rvech}{\bimH}
\safemath{\rveci}{\bimI}
\safemath{\rvecj}{\bimJ}
\safemath{\rveck}{\bimK}
\safemath{\rvecl}{\bimL}
\safemath{\rvecm}{\bimM}
\safemath{\rvecn}{\bimN}
\safemath{\rveco}{\bomO}
\safemath{\rvecp}{\bimP}
\safemath{\rvecq}{\bimQ}
\safemath{\rvecr}{\bimR}
\safemath{\rvecs}{\bimS}
\safemath{\rvect}{\bimT}
\safemath{\rvecu}{\bimU}
\safemath{\rvecv}{\bimV}
\safemath{\rvecw}{\bimW}
\safemath{\rvecx}{\bimX}
\safemath{\rvecy}{\bimY}
\safemath{\rvecz}{\bimZ}
\safemath{\rvecxi}{\bmxi}
\safemath{\rveclambda}{\bmlambda}
\safemath{\rvecmu}{\bmmu}
\safemath{\rvectheta}{\bmtheta}
\safemath{\rvecphi}{\bmphi}
\safemath{\rmatA}{\bimA}
\safemath{\rmatB}{\bimB}
\safemath{\rmatC}{\bimC}
\safemath{\rmatD}{\bimD}
\safemath{\rmatE}{\bimE}
\safemath{\rmatF}{\bimF}
\safemath{\rmatG}{\bimG}
\safemath{\rmatH}{\bimH}
\safemath{\rmatI}{\bimI}
\safemath{\rmatJ}{\bimJ}
\safemath{\rmatK}{\bimK}
\safemath{\rmatL}{\bimL}
\safemath{\rmatM}{\bimM}
\safemath{\rmatN}{\bimN}
\safemath{\rmatO}{\bimO}
\safemath{\rmatP}{\bimP}
\safemath{\rmatQ}{\bimQ}
\safemath{\rmatR}{\bimR}
\safemath{\rmatS}{\bimS}
\safemath{\rmatT}{\bimT}
\safemath{\rmatU}{\bimU}
\safemath{\rmatV}{\bimV}
\safemath{\rmatW}{\bimW}
\safemath{\rmatX}{\bimX}
\safemath{\rmatY}{\bimY}
\safemath{\rmatZ}{\bimZ}
\safemath{\rmatDelta}{\bimDelta}
\safemath{\rmatLambda}{\bimLambda}
\safemath{\rmatPhi}{\bimPhi}
\safemath{\rmatSigma}{\bimSigma}
\safemath{\rmatOmega}{\bimOmega}
\safemath{\rmatTheta}{\bimTheta}
\newenvironment{textbmatrix}{	\setlength{\arraycolsep}{2.5pt}%
								\big[\begin{matrix}}{\end{matrix}\big]%
								\raisebox{0.08ex}{\vphantom{M}}}
\def\be{\begin{equation}}
\def\ee{\end{equation}}
\def\een{\nonumber \end{equation}}
\def\mat{\begin{bmatrix}}
\def\emat{\end{bmatrix}}
\def\btm{\begin{textbmatrix}}
\def\etm{\end{textbmatrix}}
\def\ba#1\ea{\begin{align}#1\end{align}}
\def\bas#1\eas{\begin{align*}#1\end{align*}}
\def\bs#1\es{\begin{split}#1\end{split}}
\def\bg#1\eg{\begin{gather}#1\end{gather}}
\def\bml#1\eml{\begin{multline}#1\end{multline}}
\def\bi#1\ei{\begin{itemize}#1\end{itemize}}
\newcommand{\orth}{\perp}					% orthogonal
\newcommand{\tp}[1]{\ensuremath{#1^{T}}} 		% transpose
\newcommand{\herm}[1]{\ensuremath{#1^{H}}} 	% hermitian transpose
\safemath{\dirac}{\delta}					% Dirac delta
\safemath{\krond}{\dirac}					% Kronecker delta
\safemath{\upto}{\uparrow}
\safemath{\downto}{\downarrow}
\safemath{\iu}{j}							% imaginary unit
\safemath{\ev}{\lambda}						% eigenvalue
\safemath{\hilseqspace}{l^{2}}				% Hilbert sequence space
\newcommand{\banachfunspace}[1]{\setL^{#1}}	% Banach function space
\safemath{\hilfunspace}{\banachfunspace{2}}	% Hilbert function space
\safemath{\SNR}{\textit{SNR}} 				% signal to noise ratio
\safemath{\PAR}{\textit{PAR}} 				% signal to noise ratio
\safemath{\No}{N_0}							% noise spectral density
\safemath{\Es}{E_s}							% energy per symbol
\safemath{\Eb}{E_b}							% energy per bit
\safemath{\EbNo}{\frac{\Eb}{\No}}
\safemath{\EsNo}{\frac{\Es}{\No}}
\DeclareMathOperator{\CHop}{\ensuremath{\opH}} % channel operator
\safemath{\tvir}{\rndh_{\CHop}}				% time-varying impulse response
\safemath{\tvtf}{\rndl_{\CHop}}				% 	-''- transfer function
\safemath{\spf}{\rnds_{\CHop}}				% spreading function
\safemath{\bff}{H_{\CHop}}					% bi-freuqency function
\safemath{\ircf}{r_{h}}						% impulse response correlation fn.
\safemath{\tftvcf}{r_{s}}					% scattering function
\safemath{\tfcf}{r_{l}}						% time-frequency correlation fn.
\safemath{\bfcf}{r_{H}}						% bi-frequency correlation fn.
\safemath{\tcorr}{c_h}						% time-correlation function
\safemath{\scf}{c_{s}}						% spreading function
\safemath{\tfcorr}{c_{l}}					% transfer-function correlation
\safemath{\fcorr}{c_{H}}						% frequency-correlation function
\safemath{\mi}{I}							% mutual information
\safemath{\capacity}{C}						% capacity
\safemath{\normal}{\mathcal{N}}			% normal distribution
\safemath{\jpg}{\mathcal{CN}}			% jointly proper Gaussian
\safemath{\mchain}{\leftrightarrow}		% Markov chain
\safemath{\dB}{\,\mathrm{dB}}
\safemath{\dBm}{\,\mathrm{dBm}}
\safemath{\Hz}{\,\mathrm{Hz}}
\safemath{\kHz}{\,\mathrm{kHz}}
\safemath{\MHz}{\,\mathrm{MHz}}
\safemath{\GHz}{\,\mathrm{GHz}}
\safemath{\s}{\,\mathrm{s}}
\safemath{\ms}{\,\mathrm{ms}}
\safemath{\mus}{\,\mathrm{\text{\textmu}s}}
\safemath{\ns}{\,\mathrm{ns}}
\safemath{\ps}{\,\mathrm{ps}}
\safemath{\meter}{\,\mathrm{m}}
\safemath{\mm}{\,\mathrm{mm}}
\safemath{\cm}{\,\mathrm{cm}}
\safemath{\m}{\,\mathrm{m}}
\safemath{\W}{\,\mathrm{W}}
\safemath{\mW}{\, \mathrm{mW}}
\safemath{\J}{\,\mathrm{J}}
\safemath{\K}{\,\mathrm{K}}
\safemath{\bit}{\,\mathrm{bit}}
\safemath{\nat}{\,\mathrm{nat}}
\safemath{\define}{\triangleq}			% definition
\safemath{\equivalent}{\sim}
\safemath{\distas}{\sim}					% distributed according to
\safemath{\sdiff}{\Delta}				% symmetric set difference
\safemath{\reals}{\mathbb{R}}
\safemath{\positivereals}{\reals_{+}}
\safemath{\integers}{\mathbb{Z}}
\safemath{\posint}{\integers_{+}}
\safemath{\naturals}{\mathbb{N}}
\safemath{\posnaturals}{\naturals_{+}}
\safemath{\complexset}{\mathbb{C}}
\safemath{\rationals}{\mathbb{Q}}
\newcommand*{\fancyrefapplabelprefix}{app}		% Appendix
\newcommand*{\fancyrefthmlabelprefix}{thm}		% Theorem
\newcommand*{\fancyreflemlabelprefix}{lem}		% Lemma
\newcommand*{\fancyrefcorlabelprefix}{cor}		% Corollary
\newcommand*{\fancyrefdeflabelprefix}{def}		% Definition
\newcommand*{\fancyrefproplabelprefix}{prop}		% Proposition
\newcommand*{\fancyrefexmpllabelprefix}{exmpl}
\newcommand*{\fancyrefalglabelprefix}{alg}		% Algorithm
\newcommand*{\fancyreftbllabelprefix}{tbl}		% Algorithm
 \newtheorem{thm}{Theorem}
 \newtheorem{cor}[thm]{Corollary}   % Turned off theorem numbering
 \newtheorem{prop}{Proposition}
\safemath{\dictab}{[\,\dicta\,\,\dictb\,]}
\safemath{\ysig}{\bmy}
\safemath{\ysighat}{\hat{\ysig}}
\safemath{\ysigdim}{M}
\safemath{\xsig}{\bmx}
\safemath{\xsigdim}{N}
\safemath{\nx}{n_x}
\safemath{\zsig}{\bmz}
\safemath{\zsigdim}{\ysigdim}
\safemath{\rsig}{\bmr}
\safemath{\Adict}{\bA}
\safemath{\Adicttilde}{\widetilde{\Adict}}
\safemath{\Adictdim}{\outputdim\times\xsigdim}
\safemath{\avec}{\bma}
\safemath{\avectilde}{\tilde{\avec}}
\safemath{\Bdict}{\bB}
\safemath{\Bdicttilde}{\widetilde{\Bdict}}
\safemath{\Cdict}{\bC}
\safemath{\cvec}{\bmc}
\safemath{\Ddict}{\bD}
\safemath{\Ddictdim}{\ysigdim\times\xsigdim}
\safemath{\dvec}{\bmd}
\safemath{\Ddicttilde}{\widetilde{\bD}}
\safemath{\Bonb}{\bB}
\safemath{\bvec}{\bmb}
\safemath{\Bonbdim}{\ysigdim\times\ysigdim}
\safemath{\noise}{\bmn}
\safemath{\noisedim}{\ysigim}
\safemath{\err}{\bme}
\safemath{\errdim}{\ysigdim}
\safemath{\errset}{\setE}
\safemath{\nerr}{n_e}
\safemath{\delop}{\bP_\errset}
\safemath{\delopc}{\bP_{{\errset}^c}}
\safemath{\cplxi}{\imath}
\safemath{\cplxj}{\jmath}
\safemath{\dict}{\matD}
\safemath{\inputdim}{N}		% number of columns of dictionary D
\safemath{\outputdim}{M}		%number of rows of dictionary D
\safemath{\sparsity}{S}	%sparsity
\safemath{\inputdimA}{{N_a}}	%total number of elements in dictionary A
\safemath{\inputdimB}{{N_b}}	%total number of elements in dictionary B
\safemath{\elemA}{{n_a}}	%number of elements chosen from dictionary A
\safemath{\elemB}{{n_b}}	%number of elements chosen from dictionary B
\safemath{\resA}{\matR_a}	%restriction map to elements of dictionary A
\safemath{\resB}{\matR_b}	%restriction map to elements of dictionary B
\safemath{\subD}{\matS} %subdictionary
\safemath{\subA}{\matS_a} %subdictionary part of A
\safemath{\subB}{\matS_b} %subdictionary part of B
\safemath{\dicta}{\matA} 	% first subdictionary
\safemath{\dictb}{\matB} 	% second subdictionary
\safemath{\hollowS}{H}
\safemath{\hollowA}{H_a}
\safemath{\hollowB}{H_b}
\safemath{\cross}{Z}
\safemath{\coh}{\mu_d}			% coherence dictionary
\safemath{\coha}{\mu_a}			% coherence first subdictionary
\safemath{\cohb}{\mu_b}			% coherence second subdictionary
\safemath{\mubs}{\nu}	%block sub-coherence
\safemath{\cohm}{\mu_m} %mutual coherence
\safemath{\dictset}{\setD}	% set of dictionaries
\safemath{\dictsetp}{\dictset(\coh,\coha,\cohb)}	% set of dictionaries parametrized
\safemath{\dictsetgen}{\dictset_\text{gen}}
\safemath{\dictsetgenp}{\dictsetgen(\coh)}
\safemath{\dictsetonb}{\dictset_\text{onb}}
\safemath{\dictsetonbp}{\dictsetonb(\coh)}
\safemath{\leftside}{U}
\safemath{\rightsideA}{R_a}
\safemath{\rightsideB}{R_b}
\safemath{\indexS}{\setI_S} %set of indices participating in sub-dictionary S
\safemath{\na}{n_a}			% cardinality of set of linearly independent columns of first dictionary
\safemath{\nb}{n_b}			% cardinality of set of linearly independent columns of second dictionary
\safemath{\coeffa}{p_i}	%coefficients for columns of A
\safemath{\coeffb}{q_j}	%coefficients for columns of B
\safemath{\seta}{\setP}		% set of linearly independent columns of A
\safemath{\setb}{\setQ}     % set of linearly independent columns of B
\safemath{\setw}{\setW}	%set of n largest elements of w
\safemath{\setz}{\setZ}	%set of L-n largest elements of z
\safemath{\cola}{\veca}		% generic element of the dictionary A
\safemath{\colb}{\vecb}		% generic element of the dictionary B
\safemath{\cold}{\vecd}		% generic element of the dictionary D
\safemath{\inputvec}{\vecx} 	%coefficient vector (input)
\safemath{\error}{\vece}	%error vector
\safemath{\noiseout}{\vecz} 	%noisy output vector
\safemath{\inputvecel}{x}
\safemath{\inputveca}{\vecx_a}
\safemath{\inputvecb}{\vecx_b}
\safemath{\outputvec}{\vecy}	%output of Dictionary
\safemath{\lambdamin}{\lambda_{\mathrm{min}}}
\safemath{\elltwo}{\ell_2}
\safemath{\ellone}{\ell_1}
\safemath{\ellzero}{\ell_0}
\safemath{\ellinf}{\ell_\infty}
\safemath{\ellinftilde}{\ell_{\widetilde\infty}}
\safemath{\licard}{Z(\coh,\coha,\cohb)}
\safemath{\xsol}{\hat{x}}
\safemath{\xbord}{x_b}		%Solution at the border
\safemath{\xstat}{x_s}		%Solution stationary in l0 prob
\safemath{\xstatLone}{\tilde{x}_s}
\safemath{\order}{\mathcal{O}} %order notation (big O)
\safemath{\scales}{\Theta} %scales as
\safemath{\ones}{\mathbf{1}} %all ones matrix
\safemath{\zeroes}{\mathbf{0}} %all zeroes matrix
\safemath{\thlone}{\kappa(\coh,\cohb)} %treshold l1 problem
\safemath{\constoneA}{\delta} %constant in l1 theorem to save space
\safemath{\constoneB}{\epsilon} %constant in l1 theorem to save space
\safemath{\nlarge}{L}				   %num large elements
\safemath{\sumlarge}{S_\nlarge}
\safemath{\maxlarger}{P_\nlarge}	   % maximum in Gribonval and Nielsen
\safemath{\Pzero}{\textrm{P0}}	
\safemath{\Pone}{\textrm{P1}}
\safemath{\vecfir}{\vecw}			 % \vecv element of the kernel of the dictionary, \vecv=[\vecfir \vecsec]
\safemath{\vecsec}{\vecz}
\safemath{\elvecfir}{w}              % element of vecfir
\safemath{\elvecsec}{z}				 % element of vecsec
\safemath{\nlargefir}{n}
\safemath{\normout}{\gamma}
\safemath{\auxfun}{h}
\safemath{\supp}{\textrm{supp}}%support
\safemath{\indexa}{\ell}
\safemath{\indexb}{r}
\safemath{\indexc}{i}
\safemath{\indexd}{j}
\safemath{\project}{P}%projector
\safemath{\Wmso}{\bW_{\textnormal{MSO}}}
\safemath{\Wpos}{\bW_{\textnormal{POS}}}
\safemath{\Waug}{\bW_{\textnormal{JAU}}}
\safemath{\bdelta}{\mathbf{\Delta}}
\renewcommand{\bSigma}{\mathbf{\Sigma}}
\safemath{\sfp}{\textsf{p}}
\safemath{\sfc}{\textsf{c}}
\begin{document}

\title{\LARGE Single-Antenna Jammers in MIMO-OFDM\\Can Resemble Multi-Antenna Jammers}

\author{
	\IEEEauthorblockN{Gian Marti and Christoph Studer}
	\thanks{
	This work was supported in part by an ETH Research Grant. 
	The work of CS was supported in part by the U.S. National Science Foundation (NSF) under grants CNS-1717559 and ECCS-1824379.
	}
	\thanks{The authors are with the Department of Information Technology and Electrical Engineering, 
	ETH Zurich, Switzerland (email: gimarti@ethz.ch, studer@ethz.ch)}
}

\maketitle

\begin{abstract}
	In multiple-input multiple-output (MIMO) wireless systems with \emph{frequency-flat} channels, a single-antenna
	jammer causes receive interference that is confined to a one-dimensional subspace. 
	Such a jammer can thus be nulled using linear spatial filtering at the cost of one degree of freedom.
\emph{Frequency-selective} channels are often transformed into multiple frequency-flat subcarriers with orthogonal frequency-division multiplexing~(OFDM).
We show that when a single-antenna jammer violates the OFDM protocol by not sending a cyclic prefix, the interference received on each subcarrier by a multi-antenna receiver is, in general, not confined to a subspace of dimension \emph{one} (as a single-antenna jammer in a frequency-flat scenario would be), but of dimension~$L$, where $L$ is the jammer's number of channel taps. 
In MIMO-OFDM systems, a single-antenna jammer can therefore resemble an $L$-antenna jammer. 
Simulations corroborate our theoretical results. 
These findings imply that mitigating jammers with large delay spread through linear spatial filtering is infeasible.
We discuss some (im)possibilities for the way forward.  
\end{abstract}

\begin{IEEEkeywords}
Cyclic prefix, jammer mitigation, MIMO, OFDM.
\end{IEEEkeywords}

%\vspace{-5mm}

\section{Introduction}
\IEEEPARstart{J}{ammers} are a pervasive threat to wireless communication~\cite{pirayesh2022jamming}. 
Linear spatial filtering in~multiple-input multiple-output (MIMO) systems was shown to be effective in mitigating jammers and has been studied extensively; see, e.g.,\cite{pirayesh2022jamming, leost2012interference, marti2023maed}.
Even though many modern wireless systems utilize orthogonal frequency-division multiplexing (OFDM) \cite{hwang2008ofdm} to deal with frequency-selective wideband channels, the jammer-mitigation literature has focused mostly on frequency-flat channels~\cite{do18a, akhlaghpasand20a, marti2023maed, marti2021snips}.
References~\cite{leost2012interference, sodagari2012efficient, mah2015improved, yan2016jamming, zeng2017enabling, pirayesh2021jammingbird} consider frequency-selective channels with OFDM, but substitute a frequency-flat input-output model for each subcarrier where the (single-antenna) jammer is modeled as one-dimensional interference per subcarrier.
However, to transform a frequency-selective channel into mutually orthogonal frequency-flat subcarriers, 
OFDM requires the transmitters to prepend a cyclic prefix to each OFDM symbol \cite{hwang2008ofdm}. 
Since jammers are malicious, they might not send a cyclic prefix.
This has been noted before~\cite{gollakota2011clearing, shahriar2014phy, javed2017novel},
but the consequences of violating the cyclic prefix in MIMO-OFDM systems have not been analyzed and are unknown.

This letter fills this gap by showing 
that a single-antenna jammer that violates the cyclic prefix can appear
as $L$-dimensional interference per subcarrier (i.e., like an $L$-antenna jammer), 
where $L$ is the jammer's number of nonzero channel taps. 
The consequences are (i)~that the common practice of modeling jammers as one-dimensional interference
per OFDM-subcarrier is wrong and (ii)~that mitigating jammers with large delay spread via linear 
spatial~filtering is impractical. Possibilities for the way forward are discussed.

\subsection{Notation}
We use uppercase boldface for matrices; $\tp{\bA}$ and $\herm{\bA}$ are the transpose and conjugate transpose of $\bA$, respectively.
Column vectors are denoted in two different ways: as lowercase boldface (e.g., $\bma$), if they represent \emph{spatial} vectors (e.g., corresponding to an antenna array), or as underlined lowercase boldface (e.g.,~$\underline{\bma}$), if they represent
vectors across time/frequency. 
Comma denotes horizontal concatenation (e.g., $[\bma, \bmb]$); semicolon denotes vertical concatenation (e.g., $[\bma;\bmb]$). 
A vector $\bma\in\opC^N$ in reverse order is denoted by its reflected letter $\reflectbox{\bma}=\tp{[a_N,\dots,a_1]}$.
The subspace spanned by $\bma$ is $\text{span}(\bma)$; its orthogonal complement is $\text{span}(\bma)^\orth$.
The $M\!\times\! N$ zero matrix is $\mathbf{0}_{M\times N}$ or~$\mathbf{0}$.
$\bF_N$ is the unitary $N$-point discrete Fourier transform (DFT) matrix; $\oast$ denotes cyclic convolution. 
$\setC\setN(\mathbf{0},\bC)$ is the circularly-symmetric complex Gaussian distribution with covariance $\bC$.

\section{Prerequisites}
\subsection{MIMO Jammer Mitigation in Frequency-Flat Systems} \label{sec:intro_mit}

In MIMO systems with frequency-flat channels, the input-output (I/O) relation between transmitter and 
receiver in presence of a single-antenna jammer can be modeled as 
\begin{align}
	\bmy[k] = \bH\bms[k] + \bmj w[k] + \bmn[k]. \label{eq:flat_io}
\end{align}
Here, $\bmy[k]\in\opC^B$ is the receive signal of a $B$-antenna receiver at sample instant~$k$,
$\bH\in\opC^{B\times U}$ is the frequency-flat channel matrix between the receiver and one or more 
transmitters with a total of $U$ antennas and transmit signal \mbox{$\bms[k]\in\opC^U$,}
\mbox{$\bmj\in\opC^B$} is the frequency-flat channel of the jammer with transmit signal $w[k]\in\opC$, 
and $\bmn[k]\sim\setC\setN(\mathbf{0},\No\bI_B)$ models noise. 
We assume a block fading scenario so that~$\bH$ and~$\bmj$ do not depend on $k$. 

In~\eqref{eq:flat_io}, the receive interference is restricted to the one-dimensional 
subspace $\text{span}(\bmj)\subset\opC^B$. The receiver can therefore mitigate the jammer by projecting the receive 
signal onto the orthogonal complement $\text{span}(\bmj)^\orth$ of $\text{span}(\bmj)$.
Specifically, let $\bmu_1,\dots,\bmu_{B-1}\in\opC^B$ be an orthonormal basis of $\text{span}(\bmj)^\orth$, 
and let $\bU \triangleq [\bmu_1,\dots,\bmu_{B-1}]\in\opC^{(B-1)\times B}$. Then, the receiver can project
the receive signal onto $\text{span}(\bmj)^\orth$ by computing 
\begin{align} 
	\bar\bmy[k] &\triangleq \herm{\bU}\bmy[k]
	= \underbrace{\herm{\bU} \bH}_{\triangleq\bar\bH\in\opC^{(B-1)\times U}\hspace{-17mm}}\bms[k] 
	~+~ \underbrace{\herm{\bU}\bmj}_{=\mathbf{0}\hspace{-2mm}}w[k]
	+ \underbrace{\herm{\bU}\bmn[k]}_{\triangleq \bar\bmn[k]\hspace{-5mm}} \\
	&= \bar\bH\bms[k] + \bar\bmn[k] \in\opC^{B-1}.
\end{align}
Thus, the receiver can eliminate the jammer at the cost of one degree of freedom, 
obtaining a virtual jammer-free I/O relationship with channel matrix $\bar\bH\in\opC^{(B-1)\times U}$
and noise $\bar\bmn[k]\sim\setC\setN(\mathbf{0},\No\bI_{B-1})$.
The following result is \mbox{well-known}:
\begin{prop}
In frequency-flat MIMO, the receive interference of single-antenna jammer is confined 
to a one-dimensional subspace and can be nulled at the cost of one degree of freedom. 
\end{prop}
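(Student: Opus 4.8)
The plan is to make precise the short computation already sketched above the statement. For the first claim, I would simply note that in the model~\eqref{eq:flat_io} the jammer's contribution to the receive signal at sample instant~$k$ is $\bmj\,w[k]$, i.e., a scalar multiple of the \emph{fixed} (block-fading) channel vector $\bmj\in\opC^B$. Hence, for every realization of the scalar process $w[k]$, this contribution lies in $\text{span}(\bmj)$, which is a subspace of $\opC^B$ of dimension at most one. This is the entire content of the first assertion.

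For the second claim, I would argue via orthogonal projection. If $\bmj=\mathbf{0}$ there is no interference and nothing to do, so assume $\bmj\neq\mathbf{0}$; then $\text{span}(\bmj)^\orth$ has dimension exactly $B-1$. Pick an orthonormal basis $\bmu_1,\dots,\bmu_{B-1}$ of $\text{span}(\bmj)^\orth$ and collect these vectors as the columns of $\bU$, so that $\herm{\bU}\bU=\bI_{B-1}$ and $\herm{\bU}\bmj=\mathbf{0}$. Left-multiplying~\eqref{eq:flat_io} by $\herm{\bU}$ annihilates the jammer term and yields the jammer-free I/O relation $\bar\bmy[k]=\bar\bH\bms[k]+\bar\bmn[k]\in\opC^{B-1}$ with $\bar\bH\triangleq\herm{\bU}\bH$ and $\bar\bmn[k]\triangleq\herm{\bU}\bmn[k]$, exactly as displayed above.

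It then remains to verify that this costs precisely one degree of freedom and nothing more. Because $\bU$ has orthonormal columns, $\bar\bmn[k]=\herm{\bU}\bmn[k]\sim\setC\setN(\mathbf{0},\No\herm{\bU}\bU)=\setC\setN(\mathbf{0},\No\bI_{B-1})$, so the transformed noise is still white with the same per-entry variance; the receiver thus retains a full $(B-1)$-dimensional observation of the data through the effective channel $\bar\bH$, having given up only the one dimension $\text{span}(\bmj)$. There is no genuine obstacle here: the statement is a direct linear-algebra fact, and the only points worth a line of care are the degenerate case $\bmj=\mathbf{0}$, the check that the projected noise remains \iid Gaussian (which uses only that $\bU$ is an isometry onto $\text{span}(\bmj)^\orth$), and the implicit genericity assumption that the column space of $\bH$ is not contained in $\text{span}(\bmj)$ — so that the surviving $B-1$ dimensions actually carry the data signal.
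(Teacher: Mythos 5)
Your proposal is correct and follows essentially the same route as the paper: it identifies the interference as a scalar multiple of the fixed channel vector $\bmj$, hence confined to $\text{span}(\bmj)$, and then nulls it by left-multiplying with $\herm{\bU}$ for an orthonormal basis $\bU$ of $\text{span}(\bmj)^\orth$, checking that the projected noise remains white. Your added remarks on the degenerate case $\bmj=\mathbf{0}$ and on the isometry property of $\bU$ are sensible refinements of the paper's argument, not a different approach.
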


The extension to multiple or multi-antenna jammers is straightforward: 
The receiver can completely null jammers with a channel matrix $\bJ$ at the cost
of $\text{rank}(\bJ)$ degrees of~freedom. 

\subsection{OFDM Basics}
We give a short but rigorous derivation of OFDM that serves as 
a template for the proof of our main result in \fref{sec:main}.
OFDM is widely used for eliminating inter-symbol interference (ISI) that results from 
high symbol-rate communication over frequency-selective wideband channels~\cite{hwang2008ofdm}.
OFDM exploits the fact that the DFT matrix diagonalizes circulant matrices 
(i.e., if $\bC$ is circulant, then $\bF_N\bC\herm{\bF_N}$ is diagonal)
to partition a wideband channel into multiple frequency-flat narrowband subcarriers.  
The subcarriers are orthogonal and can be treated~independently,
without need for complex equalization~techniques.

Specifically, consider a discrete-time single-input single-output (SISO) system in which the channel impulse 
response has length $L$ and is denoted by $\underline{\bmh}=\tp{[h[1],\dots,h[L]]}$. 
If the transmitter transmits an information-carrying sequence $\underline{\bmx}=\tp{[x[1],\dots,x[N]]}$, 
then the channel output is a length-$(N+L-1)$ sequence $\underline{\bmy}=\tp{[y[1],\dots,y[N+L-1]]}$ given by 
\begingroup
\setlength{\arraycolsep}{0pt}
\begin{align}
\begin{bmatrix}
	y[1] \\ \vdots \\ y[N\!+\!L\!-\!1]
\end{bmatrix}
= 
\begin{bmatrix}
	h[1]	& 0		& 		& \cdots	& 0		\\[-1mm]
			& \ddots&	 	&		& \vdots	\\[-1mm]
	0		& \cdots& ~\tp{\underline{\reflectbox{\bmh}}} & \cdots & 0 \\[-1mm]
	\vdots	& 		& 		& \ddots& \vdots	\\[-1mm]
	0		&  \cdots& 		& 0		& h[L]
\end{bmatrix}
\begin{bmatrix}
	x[1] \\ \vdots \\ x[N]
\end{bmatrix} 
+ \underline{\bmn}, \label{eq:no_cp_io}
\end{align}
\endgroup
where the entries of the noise $\underline{\bmn}$ are i.i.d.\ $\setC\setN(0,\No)$ distributed.

In OFDM, the transmitter does not simply transmit the information-carrying sequence $\underline{\bmx}$. 
Instead, it prepends a so-called cyclic prefix of length $P\geq L-1$, which consists of the last $P$ symbols of $\underline{\bmx}$. 
That is, the transmitter does not transmit the length-$N$ sequence $\underline{\bmx}$ but the length-$(N+P)$ sequence 
$\tilde{\underline{\bmx}}=\tp{[x[N\!-\!P\!+\!1], \dots, x[N], x[1], x[2], \dots, x[N]]}$. 
The sequence $\tilde{\underline{\bmx}}$ is called an OFDM symbol.  
Analogous to~\eqref{eq:no_cp_io}, the receive sequence associated to an OFDM symbol 
has length $N+P+L-1$. The receiver discards the first $P$ of these receive symbols (which correspond to the cyclic prefix) 
as well as the last $L-1$ receive symbols (which correspond to the reverberation of the channel impulse response, cf. \eqref{eq:no_cp_io}).\footnote{In practice, 
the cyclic prefix of the next OFDM symbol can already be transmitted during the reverberation period, i.e., directly after $x[N]$.}
What remains is a length-$N$ sequence $\underline{\tilde \bmy}=\tp{[\tilde y[1],\dots,\tilde y[N]]}$ whose dependence on the 
transmit signal can be written~as
\begingroup
\setlength{\arraycolsep}{1pt}
\begin{align}
\!\!\!\!\begin{bmatrix}
	\tilde y[1] \\ \vdots \\ \tilde y[N]
\end{bmatrix}
&= 
\begin{bmatrix}
	0&\cdots~0& & \tp{\reflectbox{\underline{\bmh}}}	& 0		& 		&  \cdots	& 0		\\[-1mm]
	& &		& & \ddots&	 	&  		& \vdots	\\
	& &		& & 0& \tp{\reflectbox{\underline{\bmh}}} & 0 & \\[-1mm]
	\vdots& & & & 		& 		& \ddots&   \vdots	\\
	0&\cdots & & & 	& & 0		& \tp{\reflectbox{\underline{\bmh}}}
\end{bmatrix}
\tilde{\underline{\bmx}} 
+ \underline{\bmn} \label{eq:pre_cp_trick} \\
&\hspace{-10mm}= 
\begin{bmatrix}
	h[1] & 0		&  \cdots	& 0	& h[L] & \cdots & & h[2]	\\
	h[2] & h[1]		&  \cdots	& 0	& 0 	   & h[L] & \cdots & h[3]	\\[-1mm]
		& &	  \ddots	&  		& 	\\[-1mm]
	0		&\cdots &	& & h[L] &   \multicolumn{2}{c}{\cdots} & h[1] 
\end{bmatrix}
\begin{bmatrix}
	x[1] \\ \vdots \\ x[N]
\end{bmatrix}
\!+ \underline{\bmn},\! \label{eq:post_cp_trick}
\end{align}
\endgroup
where the step from \eqref{eq:pre_cp_trick} to \eqref{eq:post_cp_trick} exploits the cyclic prefix of 
$\tilde{\underline{\bmx}}$. 
(In~\eqref{eq:pre_cp_trick}, the $P-L+1$ first columns of the matrix are zero.)
Since the matrix in \eqref{eq:post_cp_trick} is circulant, we can also write 
\eqref{eq:post_cp_trick} as a cyclic convolution with the channel impulse~response: 
\begin{align}
	\underline{\tilde \bmy}=\underline{\bmh}\oast\underline{\bmx}+\underline{\bmn}. \label{eq:cyc_conv}
\end{align} 
By defining 
{$\hat{\underline{\bmy}}=\bF_N\underline{\tilde\bmy}$}, $\hat{\underline{\bmh}}=\bF_N[\underline{\bmh};\mathbf{0}_{(N-L)\times1}]$, 
$\hat{\underline{\bmx}}=\bF_N\underline{\bmx}$, and $\hat{\underline{\bmn}}=\bF_N\underline{\bmn}$,
we can restate \eqref{eq:cyc_conv} in the frequency domain~as 
\begin{align}
	\hat{\underline{\bmy}} = \sqrt{N} \text{diag}(\hat{\underline{\bmh}})\,\hat{\underline{\bmx}} + \hat{\underline{\bmn}}.
\end{align}
So if the transmitter sets the information-carrying sequence $\underline{\bmx}$ to the inverse DFT of 
the symbol sequence $\underline{\bms}=\tp{[s[1],\dots,s[N]]}$, $\underline{\bmx}=\herm{\bF_N}\underline{\bms}$, 
and if the receiver ignores the first $P$ receive symbols and computes the DFT of the 
next $N$ receive symbols, $\hat{\underline{\bmy}}=\bF_N\underline{\tilde\bmy}$, then the resulting I/O relation can be stated as 
\begin{align}
	\hat{\underline{\bmy}} =  \sqrt{N} \text{diag}(\hat{\underline{\bmh}})\,\underline{\bms} + \hat{\underline{\bmn}}. \label{eq:ofdm}
\end{align}
{Note the absence of ISI in \eqref{eq:ofdm}, which consists of $N$ scalar I/O relations that correspond}
to $N$ OFDM subcarriers. These subcarrier I/O relations can also be stated individually~as
\begin{align}
	\hat y[k] =  \sqrt{N} \hat h[k] s[k] + \hat n[k], \quad k=1,\dots,N. \label{eq:siso_sc}
\end{align}
This SISO-OFDM template can be applied analogously for SIMO or MIMO systems, 
where the I/O relation for the $k$th subcarrier is 
\begin{align}
	\boldsymbol{\hat}\bmy[k] =  \sqrt{N}  \boldsymbol{\hat}\bmh[k]\bms[k] + \boldsymbol{\hat}\bmn[k], \quad k=1,\dots,N,
\end{align}
and
\begin{align}
	\boldsymbol{\hat}\bmy[k] =  \sqrt{N} \boldsymbol{\hat}\bH[k]\bms[k] + \boldsymbol{\hat}\bmn[k], \quad k=1,\dots,N,
\end{align}
respectively, and where the individual entries of $\boldsymbol{\hat}\bmh[k]$ and $\boldsymbol{\hat}\bH[k]$ correspond to 
the $k$th subcarrier between the individual transmit and receive antennas. 
In both cases, the I/O relation between the individual transmit and receive antennas has the form of~\eqref{eq:siso_sc}.
However, OFDM relies on the transmitter(s) to send a cyclic prefix that enables treating the 
convolution~\eqref{eq:pre_cp_trick} as a cyclic convolution~\eqref{eq:cyc_conv} and to diagonalize it using a DFT matrix. 

\section{Main Result} \label{sec:main}
\emph{If} a single-antenna jammer were to comply with OFDM by sending a cyclic prefix,\footnote{The jammer's 
cyclic prefix would have to be aligned 
with the cyclic prefix of the legitimate transmitter, meaning that it needs to have the same length~$P$. 
Furthermore, the jammer would need to use the same number of subcarriers~$N$ and be properly time-synchronized to the OFDM scheme.}
then (and only then) the per-subcarrier I/O relation of a MIMO-OFDM system would have the form 
\begin{align}
	\boldsymbol{\hat}\bmy[k] = \sqrt{N}\boldsymbol{\hat}\bH[k]\bms[k] + \sqrt{N}\boldsymbol{\hat}\bmj[k] w[k] + \boldsymbol{\hat}\bmn[k] \label{eq:jammer_mimo_sc}
\end{align}
for $k=1,\dots,N$  (with $\boldsymbol{\hat}\bmj[k]$ being the $k$th subcarrier of the jammer's channel to the receiver), which is structurally equivalent to the frequency-flat I/O relation in \eqref{eq:flat_io}. 
The receive interference would be confined to a one-dimensional subspace and could be nulled at the cost of one degree of freedom. 

By nature, however, jammers are malicious---if they can gain an advantage from violating the cyclic prefix, 
they will~do~so. 
The question is how the absence of a cyclic prefix affects the per-subcarrier I/O relation of a MIMO-OFDM system 
under a jamming attack. The answer is the main result of this letter. 
To formally state the result, we assume that the channel impulse response between the single-antenna jammer and the $B$-antenna
receiver consists of $L$ nonzero 
taps $\bmj[1], \dots, \bmj[L]\in\opC^B$.

\begin{thm} \label{thm:main}
	The per-subcarrier receive interference caused by a single-antenna jammer that attacks a MIMO-OFDM system without
	sending a cyclic prefix spans a subspace whose dimension can be up to $\min\{B,L\}$.
\end{thm}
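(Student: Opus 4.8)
The plan is to repeat the OFDM derivation above (the one leading to~\eqref{eq:ofdm}) for the jammer's contribution, but to omit the cyclic-prefix step that converted \eqref{eq:pre_cp_trick} into the circulant relation \eqref{eq:post_cp_trick}, since by assumption the jammer sends no cyclic prefix. Fix one OFDM symbol and let $w[\cdot]\in\opC$ denote the (entirely arbitrary) sequence emitted by the jammer. Across the $B$ receive antennas, the jammer's contribution to the channel output at sample instant $t$ is the linear convolution $\sum_{\ell=1}^{L}\bmj[\ell]\,w[t-\ell+1]\in\opC^{B}$. Discarding the first $P$ output samples and applying $\bF_{N}$ (per antenna) to the following $N$ samples, exactly as in that derivation, yields the jammer's contribution on subcarrier~$k$ as
\begin{align}
  \hat\bmy_{\mathrm J}[k] \;=\; \sum_{\ell=1}^{L} c_{\ell,k}\,\bmj[\ell], \qquad k=1,\dots,N, \label{eq:jamspan}
\end{align}
where $c_{\ell,k}\triangleq\sum_{\tau=1}^{N}[\bF_{N}]_{k,\tau}\,w[P+\tau-\ell+1]$ is the $k$th DFT coefficient of the length-$N$ window of the jammer's sequence shifted by $\ell-1$ samples. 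The only structural difference from the legitimate transmitter is that, without a cyclic prefix, the coefficients $c_{\ell,k}$ with $\ell\geq 2$ generally do not vanish, since there is no prefix to fold the convolution tail back onto the leading samples. The upper bound then follows at once: \eqref{eq:jamspan} shows that every interference vector $\hat\bmy_{\mathrm J}[k]$ lies in $\text{span}(\bmj[1],\dots,\bmj[L])\subseteq\opC^{B}$, a subspace of dimension at most $\min\{B,L\}$.

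For the converse (``can be up to''), I would exhibit a channel and a jamming strategy attaining dimension $d\triangleq\min\{B,L\}$. Take generic taps, so that $\bmj[1],\dots,\bmj[L]$ span a $d$-dimensional subspace of $\opC^{B}$, and, in the $m$th of $L$ consecutive OFDM symbols, let the jammer transmit a single unit impulse at position $P+2-m$ of its frame (well-defined since $P\geq L-1$). A short computation then specializes \eqref{eq:jamspan} to $\hat\bmy_{\mathrm J}^{(m)}[k]=\sum_{\ell=m}^{L}[\bF_{N}]_{k,\ell-m+1}\,\bmj[\ell]$ for every subcarrier~$k$; since $[\bF_{N}]_{k,1}\neq0$, back-substituting downward in $m$ shows that each $\bmj[\ell]$ lies in $\text{span}\{\hat\bmy_{\mathrm J}^{(1)}[k],\dots,\hat\bmy_{\mathrm J}^{(L)}[k]\}$. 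Hence, on every subcarrier, the interference over these $L$ OFDM symbols already spans $\text{span}(\bmj[1],\dots,\bmj[L])$, which has dimension~$d$, and by \eqref{eq:jamspan} the dimension can never exceed $d$.

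The routine part is the upper bound, which merely retraces the OFDM template. The real work is in the converse: one must pick a jammer strategy whose induced coefficients $\{c_{\ell,k}\}$ make the $L$ channel taps appear as $L$ independent receive directions, and then verify that the relevant (DFT-derived) coefficient matrix has full rank, ideally in a single argument covering both $L\leq B$ and $L>B$. One should also be careful with the frame and timing bookkeeping behind \eqref{eq:jamspan}: it is precisely the index ranges of the jammer symbols that survive the receiver's discard-and-DFT step, and are not folded back as they would be for a cyclic-prefixed transmitter, that turn one-dimensional interference into $L$-dimensional interference.
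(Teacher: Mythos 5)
Your proof is correct, but it takes a genuinely different route from the paper's. For the upper bound, you bypass the paper's decomposition of the jammer signal into a cyclic part plus a prefix-supported correction $\underline{\Delta}^{(\sfp)}$ (which leads to the effective channel $\boldsymbol{\mathsf{J}}[k]=[\sqrt{N}\hat\bmj[k],\,\bR[k]]$ and a rank bound via a factorization of $\bR[k]$); instead you expand the discard-and-DFT output directly as $\sum_{\ell=1}^{L}c_{\ell,k}\,\bmj[\ell]$, so every per-subcarrier interference vector manifestly lies in $\mathrm{span}(\bmj[1],\dots,\bmj[L])$ and the $\min\{B,L\}$ bound is immediate. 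This is cleaner and makes transparent that the interference subspace is exactly the span of the spatial tap vectors; the paper's decomposition is less direct but buys the explicit effective I/O model \eqref{eq:jammer_mimo_eff_sc}, which isolates the cyclic-prefix violation as the extra term $\bR[k]\underline{\Delta}^{(\sfp)}$ and is reused in the rest of the letter. For achievability, the paper fixes a special channel (unit-vector taps, requiring $L\le B$, with the case $L>B$ handled by the trivial antenna bound) and checks that $\bR[k]$ has rank $L-1$ with $\hat\bmj[k]$ outside its column space; you instead take generic taps and an explicit jamming waveform (unit impulses at positions $P+2-m$ over $L$ OFDM symbols), and your triangular back-substitution via $[\bF_N]_{k,1}\neq 0$ correctly recovers all $L$ taps, covering $L\le B$ and $L>B$ uniformly. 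Your index bookkeeping checks out ($\tau=\ell-m+1$ lands in the retained window exactly when $\ell\ge m$, and $P\ge L-1$ keeps the impulse positions in frame). The only caveat worth stating explicitly is one both proofs share: a single subcarrier in a single OFDM symbol yields one vector, so "spans a subspace of dimension up to $\min\{B,L\}$" must be read over a coherence interval of multiple OFDM symbols --- your $L$-symbol construction makes this reading explicit, whereas the paper argues it via the free choice of $\boldsymbol{\mathsf{w}}[k]\in\opC^{P+1}$.
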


\begin{proof}

There are two cases: In the case $L>B$, the dimension of the interference subspace is trivially bounded by the number of receive antennas $B$, and thus by $\min\{B,L\}$. 
The other case is the case $L \leq B$, for which we will prove that  
the dimension of the interference subspace is bounded by $L$, and thus by $\min\{B,L\}$.
To analyze the receive interference of a single-antenna jammer, we can ignore noise as well as signals sent 
by the legitimate transmitter.  
We use $\underline{\bmj}_b = \tp{[j_b[1],\dots,j_b[L]]}$ to denote the channel impulse response from 
the jammer to the $b$th receive antenna, $b=1,\dots,B$. 

We assume that the attacked communication system (but not the jammer!) uses an OFDM scheme with $N$ subcarriers and a 
cyclic prefix of length $P\geq L-1$.
In accordance with OFDM, the BS aggregates (per~antenna) the receive symbols that correspond
to an OFDM symbol while discarding the portions that correspond to the cyclic prefix and to the reverberation
of the channel impulse response. We can thus write the receive signal at the $b$th antenna corresponding 
to an OFDM symbol~as
\begingroup
\setlength{\arraycolsep}{1pt}
\begin{align}
\begin{bmatrix}
	y_b[1] \\ \vdots \\ y_b[N]
\end{bmatrix}
= 
\begin{bmatrix}
	0~ & \cdots & 0~ & \tp{\reflectbox{\underline{\bmj}}_b} & 0 & \cdots & 0 \\
	\vdots & & & 0 & \tp{\reflectbox{\underline{\bmj}}_b} & \cdots & 0 \\
	& & &  & & \ddots & \vdots \\
	0 & & &  & \cdots & 0 & \tp{\reflectbox{\underline{\bmj}}_b}
\end{bmatrix}
\begin{bmatrix}
	w[-P+1] \\[-2mm] \vdots \\ w[0] \\ w[1] \\[-1mm] \vdots \\ w[N]
\end{bmatrix}
\end{align}
\endgroup
or, more compactly, as 
\begin{align}
	\underline{\bmy}_b &= \bJ_b \underline{\bmw}. \label{eq:time_io}
\end{align}
The matrix $\bJ_b\in\mathbb{C}^{N\times (N+P)}$ is Toeplitz and its first $P-L+1$ columns are zero. 
We now rewrite $\underline{\bmw} = [\underline{\bmw}^{(\sfp)}; \underline{\bmw}^{(\ast)}]$, 
where $\underline{\bmw}^{(\sfp)} = [w[-P+1]; \dots; w[0]]$, 
and $\underline{\bmw}^{(\ast)} = [w[1]; \dots; w[N]]$.
Let $\underline{\bmw}^{(\sfc)}=[w[N-P+1]; \dots; w[N]; \underline{\bmw}^{(\ast)}]$ be the ``cyclic version'' of $\underline{\bmw}$, 
which is obtained by replacing $\underline{\bmw}^{(\sfp)}$ with the last $P$ entries of $\underline{\bmw}^{(\ast)}$. 
We have
\begin{align}
	\underline{\bmw} &= \underline{\bmw}^{(\sfc)} + \underline{\bmw} - \underline{\bmw}^{(\sfc)}\\
	&= \underline{\bmw}^{(\sfc)} + \underline{\Delta}, 
\end{align}
where $\underline{\Delta} \triangleq \underline{\bmw}-\underline{\bmw}^{(\sfc)}$ is nonzero only in the first $P$ entries which are denoted by 
$\underline{\Delta}^{(\sfp)} \in \opC^P$. 
This allows us to rewrite~\eqref{eq:time_io}~as 
\begin{align}
	\underline{\bmy}_b &= \bJ_b \underline{\bmw}^{(\sfc)} + \bJ_b \underline{\Delta} \\
	&= \bJ_b \underline{\bmw}^{(\sfc)} + \bJ^{(\sfp)}_b \underline{\Delta}^{(\sfp)}, \label{eq:before_circulant}
\end{align}
where the matrix $\bJ_b^{(\sfp)}$ consists of the first $P$ columns of $\bJ_b$.
Since $\underline{\bmw}^{(\sfc)}$ is now cyclic, we can define $\hat{\underline{\bmw}}^{(\ast)}=\bF_N\underline{\bmw}^{(\ast)}$, 
$\hat{\underline{\bmy}}_b=\bF_N\underline{\bmy}_b$, $\hat{\underline{\bmj}}_b=\bF_N\underline{\bmj}_b$ and use
the same trick that underlies OFDM: We multiply with the DFT matrix $\bF_N$ on both sides of \eqref{eq:before_circulant}
and restate it in the frequency domain as
\begin{align}
	\hat{\underline{\bmy}}_b= \sqrt{N}\text{diag}\big(\hat{\underline{\bmj}}_b\big) \hat{\underline{\bmw}}^{(\ast)} + \bF_N\bJ^{(\sfp)}_b \underline{\Delta}^{(\sfp)}. 
\end{align}
Focusing on the $k$th subcarrier, we have
\begin{align}
	\hat{y}_b[k]= \sqrt{N}\hat{j}_b[k] \,\hat{w}^{(\ast)}[k] + \tp{\bmf}[k]\bJ^{(\sfp)}_b \underline{\Delta}^{(\sfp)},
\end{align}
where $\tp{\bmf}[k]$ is the $k$th row of $\bF_N$, whose entries are denoted $\tp{\bmf}[k]=[f_1[k],\dots,f_N[k]]$.
If we aggregate the receive signal on the $k$th subcarrier over all $B$ receive antennas, we obtain
\begin{align}
\begin{bmatrix}
	\hat{y}_1[k] \\[-1mm] \vdots \\[-1mm] \hat{y}_B[k]
\end{bmatrix}
= \sqrt{N}
\begin{bmatrix}
	\hat{j}_1[k] \\[-1mm] \vdots \\[-1mm] \hat{j}_B[k] 
\end{bmatrix}
\hat{w}^{(\ast)}[k] + 
\begin{bmatrix}
	 \tp{\bmf}[k]\bJ^{(\sfp)}_1 \\[-1mm] \vdots \\[-1mm]  \tp{\bmf}[k]\bJ^{(\sfp)}_B 
\end{bmatrix} \underline{\Delta}^{(\sfp)}, 
\end{align}
or, in vector notation, and by defining the matrix $\bR[k] \triangleq [ \tp{\bmf}[k]\bJ^{(\sfp)}_1; \dots;  \tp{\bmf}[k]\bJ^{(\sfp)}_B]$:
\begin{align}
	\hat{\bmy}[k] &= \sqrt{N}\, \hat{\bmj}[k]\,\hat{w}^{(\ast)}[k] + \bR[k] \underline{\Delta}^{(\sfp)}. \label{eq:sc_jammer}
\end{align}
The second term in \eqref{eq:sc_jammer} is the effect of the jammer violating the cyclic prefix.
Its presence implies that the interference of a single-antenna jammer on a given subcarrier is not constrained to 
a one-dimensional subspace anymore.
We rewrite \eqref{eq:sc_jammer} as 
\begin{align}
\hat{\bmy}[k] &= \big[\sqrt{N}\, \hat{\bmj}[k],~ \bR[k] \big]
	\begin{bmatrix}
		\hat{w}^{(\ast)}[k] \\ \underline{\Delta}^{(\sfp)}
	\end{bmatrix} 
	= \boldsymbol{\mathsf{J}}[k]\boldsymbol{\mathsf{w}}[k], \label{eq:effective_jammer_channel}
\end{align}
where $\boldsymbol{\mathsf{J}}[k]\triangleq\big[\sqrt{N}\, \hat{\bmj}[k],~ \bR[k] \big]\in\opC^{B\times(P+1)}$
serves as the jammer's effective channel matrix on the $k$th subcarrier
and $\boldsymbol{\mathsf{w}}[k]\triangleq[\hat{w}^{(\ast)}[k]; \underline{\Delta}^{(\sfp)}]\in\opC^{P+1}$
serves as the jammer's~effective transmit signal on the $k$th subcarrier.
The rank of $\boldsymbol{\mathsf{J}}[k]$~is
\begin{align}
	\text{rank}(\boldsymbol{\mathsf{J}}[k])\leq L, \label{eq:rank_bound}
\end{align}
where the bound is tight in general (i.e., equality can be achieved). 
To see this, note that $\bR[k]$ can be written as
\begin{align}
	\bR[k] =
	\begin{bmatrix}
		\tp{\bmf}[k] & \mathbf{0} & \!\cdots\! & \mathbf{0} \\[-1mm]
		\mathbf{0} & \tp{\bmf}[k] & & \vdots\\[-1mm]
		\vdots & & \ddots & \mathbf{0} \\
		\mathbf{0} & \cdots & \mathbf{0} & \tp{\bmf}[k] 
	\end{bmatrix}
	\begin{bmatrix}
		\bJ^{(\sfp)}_1 \vspace{1mm} \\  \vdots \vspace{2mm} \\ \bJ^{(\sfp)}_B
	\end{bmatrix}, 
\end{align}
where the left factor has dimensions $B\times BN$ and rank~$B$, 
and the right factor has rank $L-1$ since it is a $BN\times P$ block matrix whose row blocks 
are upper triangular and have form 
\begin{align}
	\bJ_b^{(p)} = 
	\begin{bmatrix}
		0 & \cdots & 0 & j_b[L] & \!\cdots\! & j_b[3] & j_b[2] \\[-1mm]
		\vdots &  &  & 0 & j_b[L] & \!\cdots\! & j_b[3] \\[-1mm]
		\vdots &  &  &  & 0 & \!\ddots\! &  \\[-1mm]
		0 & \cdots &  &  &  & 0 & j_b[L] \\[-1mm]
		0 & \cdots &  &  &  &  \cdots & 0 \\[-1mm]
		\vdots &   &  &  &  &   & \vdots \\[-1mm]
		0 & \cdots &  &  &  &  \cdots & 0
	\end{bmatrix},
\end{align}
where only the last $L-1$ columns are nonzero,
implying the inequality in \eqref{eq:rank_bound}. 
For equality, consider the example where
$\underline{\bmj}_b=\underline{\bme}_b$ (the $b$th standard unit vector, whose entries are all zero 
except for the $b$th entry, which is one) for $b=1,\dots,L$, 
and $\underline{\bmj}_b=\underline{\boldsymbol{0}}$ for $b=L+1,\dots,B$.
In that case, we have \pagebreak[0]
\begin{align}
	\bR_k = 
	\begin{bmatrix}
		0\,\cdots\,0 & 0 & \cdots & 0 & f_1[k] \\[-1mm]
		0\,\cdots\,0 & \vdots & 0 & f_1[k] & f_2[k] \\[-1mm]
		0\,\cdots\,0 & & \iddots \\[-1mm]
		0\,\cdots\,0 & f_1[k] & f_2[k] & \cdots & f_{L-1}[k] \\
		0\,\cdots\,0 & 0 & 0 & \cdots & 0 \\[-1mm]
		\vdots & & & & \vdots \\[-1mm]
		0\,\cdots\,0 & 0 & 0 & \cdots & 0
	\end{bmatrix}, 
\end{align}
which has rank $L-1$. 
So in general, $\bR_k$ can have rank up to $L-1$.
Furthermore, we have $\hat\bmj_k = \tp{[f_1[k],\dots,f_L[k]]}$, which is not 
contained in the columnspace of $\bR_k$, so $\text{rank}(\boldsymbol{\mathsf{J}}[k])=L$.

The fact that $\text{rank}(\boldsymbol{\mathsf{J}}[k])=L$ does not by itself already prove that 
the receive interference can span a subspace of dimension $L$. We also need that the signal 
$\boldsymbol{\mathsf{w}}[k]\in\opC^{P+1}$, which is multiplied by $\boldsymbol{\mathsf{J}}[k]$, can take on any value
in $\opC^{P+1}$. But this follows from the fact that $\hat{w}^{(\ast)}[k]$ is simply the $k$th entry of the 
Fourier transform of the prefix-free jammer signal $\underline{\bmw}^{(\ast)}$ and, thus, can be set to any value by the jammer; 
$\underline{\Delta}^{(\sfp)}$ is simply the difference between the jammer's actual transmit sequence 
$\underline{\bmw}$ and its cyclic version $\underline{\bmw}^{(\sfc)}$ and so can also be set to any value by the jammer. 
From this, the result follows.
\end{proof}

Eq.~\eqref{eq:effective_jammer_channel} shows that a cyclic prefix violating jammer does~not  
yield a per-subcarrier I/O relation as in \eqref{eq:jammer_mimo_sc}, but one of the~form
\begin{align}
	\boldsymbol{\hat}\bmy[k] = \sqrt{N}\boldsymbol{\hat}\bH[k]\bms[k] 
	+ \boldsymbol{\mathsf{J}}[k]\boldsymbol{\mathsf{w}}[k] + \boldsymbol{\hat}\bmn[k], \label{eq:jammer_mimo_eff_sc}	
\end{align}
where $\boldsymbol{\mathsf{J}}[k]$ is a matrix of rank $\min\{B,L\}$. 
By violating the cyclic prefix of OFDM, a single-antenna jammer therefore looks on each subcarrier 
like an $L$-antenna transmitter in a frequency-flat communication system. 
This diminishes the effectiveness of linear multi-antenna processing for jammer-mitigation, 
since nulling the jammer with a linear projection now comes at the cost of $L$  degrees of freedom instead of one.
{The extension to multi-antenna jammers is straightforward (proof~omitted):}
\begin{cor} \label{cor:main}
	If an $I$-antenna jammer (whose $i$th antenna has $L_i$ nonzero channel taps) attacks a MIMO-OFDM system without
	sending a cyclic prefix, then its receive interference spans a subspace whose dimension~can be up to 
	$\min\{B,\sum_{i=1}^I \!L_i\}$.
\end{cor}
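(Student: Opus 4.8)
The plan is to reduce the statement to the single-antenna case of \fref{thm:main} by exploiting the linearity of both the channel and the OFDM receiver front-end. Index the jammer's antennas by $i=1,\dots,I$ and let $\underline{\bmw}^{(i)}$ denote the (prefix-free) signal emitted by its $i$th antenna. Because the channel, the per-antenna aggregation of receive symbols, the discarding of the cyclic-prefix and reverberation portions, and the DFT are all linear, the per-subcarrier receive interference is the superposition of the contributions of the individual jammer antennas. Running the derivation in the proof of \fref{thm:main} once per jammer antenna therefore yields
\begin{align}
\hat{\bmy}[k]
= \sum_{i=1}^{I}\boldsymbol{\mathsf{J}}_i[k]\,\boldsymbol{\mathsf{w}}_i[k]
= \big[\boldsymbol{\mathsf{J}}_1[k],\dots,\boldsymbol{\mathsf{J}}_I[k]\big]
\begin{bmatrix}\boldsymbol{\mathsf{w}}_1[k]\\ \vdots\\ \boldsymbol{\mathsf{w}}_I[k]\end{bmatrix},
\end{align}
where $\boldsymbol{\mathsf{J}}_i[k]\in\opC^{B\times(P+1)}$ is the $i$th jammer antenna's effective channel on subcarrier $k$, with $\rank(\boldsymbol{\mathsf{J}}_i[k])\leq L_i$ by \fref{thm:main}, and where the effective input $\boldsymbol{\mathsf{w}}_i[k]\in\opC^{P+1}$ (the $k$th Fourier coefficient of antenna $i$'s prefix-free signal together with that antenna's $P$-dimensional prefix-deviation vector) can, exactly as in \fref{thm:main}, be set to any value, independently across $i$.

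For the upper bound, the column space of the concatenated matrix equals the sum of the column spaces of the $\boldsymbol{\mathsf{J}}_i[k]$, so its rank is at most $\sum_{i=1}^{I}\rank(\boldsymbol{\mathsf{J}}_i[k])\leq\sum_{i=1}^{I}L_i$; being also at most its number of rows $B$, we obtain $\rank\big([\boldsymbol{\mathsf{J}}_1[k],\dots,\boldsymbol{\mathsf{J}}_I[k]]\big)\leq\min\{B,\sum_{i=1}^{I}L_i\}$. Since the stacked input $[\boldsymbol{\mathsf{w}}_1[k];\dots;\boldsymbol{\mathsf{w}}_I[k]]$ ranges over all of $\opC^{I(P+1)}$, the interference subspace therefore has dimension at most $\min\{B,\sum_i L_i\}$.

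For tightness, I would first handle the case $\sum_{i=1}^{I}L_i\leq B$ by spreading the rank-achieving construction from the proof of \fref{thm:main} over \emph{disjoint} groups of receive antennas: let jammer antenna $i$ be heard only by the receive antennas in $G_i\triangleq\{\sigma_{i-1}+1,\dots,\sigma_{i-1}+L_i\}$ with $\sigma_{i-1}\triangleq\sum_{i'<i}L_{i'}$, and set its channel impulse response to receive antenna $\sigma_{i-1}+t$ to the unit vector $\underline{\bme}_t$ for $t=1,\dots,L_i$. Then $\boldsymbol{\mathsf{J}}_i[k]$ is row-supported on $G_i$ and, by the achievability argument of \fref{thm:main} applied to the size-$L_i$ receive sub-array $G_i$ (the $B=L$ boundary case there), has rank $L_i$; as the $G_i$ are pairwise disjoint, the ranks add, so $\rank\big([\boldsymbol{\mathsf{J}}_1[k],\dots,\boldsymbol{\mathsf{J}}_I[k]]\big)=\sum_i L_i$, and with the free choice of the $\boldsymbol{\mathsf{w}}_i[k]$ the interference spans a subspace of that dimension. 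The case $\sum_i L_i>B$ reduces to this one: greedily fill all $B$ receive antennas with the disjoint construction, letting the last participating jammer antenna reach only the $m\leq L_{i^\star}$ rows that remain (and distributing its $L_{i^\star}$ nonzero taps so that \fref{thm:main}, applied to that $m$-antenna sub-array with a jammer having $\geq m$ taps, still yields rank $m$ there); any further jammer antennas get arbitrary channels with the prescribed tap counts and cannot lower the already-maximal rank $B$.

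The step I expect to be the main obstacle is this tightness construction: confirming the disjoint-row-support claim so that the ranks genuinely add; rechecking that the rank computation of \fref{thm:main} still gives rank $L_i$ on a receive sub-array of size exactly $L_i$, and rank $m$ on the truncated boundary antenna (which must still carry exactly $L_{i^\star}$ nonzero taps); and carrying out the $\sum_i L_i>B$ bookkeeping. The linearity/superposition reduction and the rank upper bound are routine.
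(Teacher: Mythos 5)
Your proposal is correct. The paper explicitly omits the proof of this corollary (``the extension \ldots is straightforward''), and your superposition argument---writing the per-subcarrier interference as $[\boldsymbol{\mathsf{J}}_1[k],\dots,\boldsymbol{\mathsf{J}}_I[k]]$ acting on the stacked, freely choosable effective inputs, bounding the rank by $\sum_i \mathrm{rank}(\boldsymbol{\mathsf{J}}_i[k]) \leq \sum_i L_i$ and by $B$, and achieving tightness via the unit-tap construction replicated on disjoint receive sub-arrays---is exactly the extension the authors have in mind. The only point genuinely requiring care is the one you flag, achievability when $\sum_i L_i > B$; note that the paper's own proof of Theorem~\ref{thm:main} has the same implicit step for $L > B$ (it proves only the upper bound there), and your remedy of perturbing or generically choosing the extra taps so that the already-full rank is preserved is the standard fix.
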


\section{Simulation Results}
We now use simulations to demonstrate the practical implications of the increased dimension of the interference subspace.\footnote{
Simulation code for reproducing our results and
for simulating different system parameters is available at \url{https://github.com/IIP-Group/OFDM-jammer}.
}
Specifically, we show that nulling a single dimension (per subcarrier) of the receive signal is sufficient 
for mitigating an OFDM-compliant single-antenna jammer (i.e., a jammer that sends a cyclic prefix), 
but that it is insufficient for mitigating a single-antenna jammer that does \emph{not} send a cyclic prefix 
(since, by \fref{thm:main}, the receive interference can have $\text{rank}(L)>1$). 

\subsection{Simulation Setup}
We consider a MIMO-OFDM communication system with $B=8$ antennas at the receiver and $U=2$ antennas at the transmitter
who is transmitting two independent QPSK data streams.
We consider OFDM as in the $20$\,MHz mode of IEEE 802.11n, with $N=64$ subcarriers 
(of which only $|\setK|=48$ are used for data transmission---we ignore the other subcarriers in our simulations) 
and a cyclic prefix of length $P=16$. We assume a block fading model in which the channels stay constant for 
$M=50$ OFDM symbols.
We assume a Rayleigh fading channel model with $L=4$ taps for the transmitter and the jammer, 
where the entries of the time-domain channel matrices/vectors $\bH[t],\bmj[t],t=1,\dots,L$ 
are i.i.d. $\setC\setN(0,1)$.
The jammer emits i.i.d. circularly-symmetric complex Gaussian noise at $25$\,dB higher receive energy than
the legitimate signal.

\subsection{Mitigation Performance of a Projection with Rank $B-1$}
\fref{fig:comp_noncomp} shows the performance as measured in uncoded bit eror rate (BER) vs. 
signal-to-noise ratio (SNR) for three scenarios. 

First, to serve as a benchmark, a communication system that does \emph{not} suffer from a jammer (jammerless benchmark), 
and in which the receiver uses a zero-forcing (ZF) data detector. 

Second, a communication system attacked by a single-antenna jammer that transmits i.i.d. Gaussian symbols 
but that  \emph{does adhere} to the OFDM protocol by transmitting a cyclic prefix (OFDM-compliant jammer). 
In this system, before the ZF data detector, the receiver uses a rank $B-1$ projection on each subcarrier to 
remove the strongest dimension of the jammer interference (cf. \fref{sec:intro_mit}), which in this case---since the jammer sends a cyclic 
prefix---is the \emph{only} jammer dimension. So the projection nulls the jammer perfectly, and the 
performance loss of $1$\,dB (at $0.1\%$ BER) compared to the jammerless case comes solely from the lost degree of~freedom. 

Third, a communication system attacked by a single-antenna jammer that transmits i.i.d. Gaussian symbols 
\emph{without transmitting a cyclic prefix} (cyclic prefix-violating jammer). As in the previous scenario, the receiver uses a ZF data detector 
that is preceded on each subcarrier by a rank $B-1$ projection that removes the strongest dimension of the jammer interference. 
According to \fref{thm:main}, since the jammer sends no cyclic prefix, the receive interference
is not one-dimensional but in general $(L=4)$-dimensional. Nulling a single dimension should therefore not be 
sufficient for removing the jammer interference. Our simulations confirm this theoretical result, 
since the BER remains above $20\%$ for all SNRs due to the only partially removed jammer interference.

%%%% PLEASE DO NOT CHANGE THE LAYOUT OF OUR FIGURES!!!!
\begin{figure*}[tp]
\centering
%%%% PLEASE DO NOT CHANGE THE LAYOUT OF OUR FIGURES!!!!
\begin{minipage} {0.31\textwidth}
	\includegraphics[height=4cm]{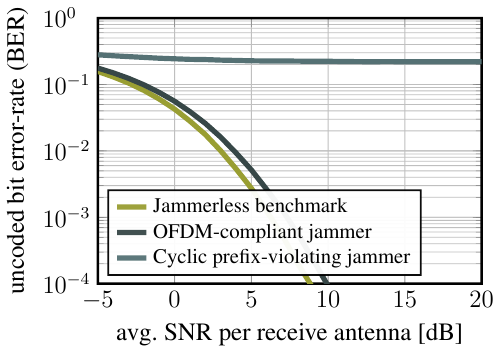}
	\caption{Bit error rates (BERs) of a jammerless system, a system attacked by an OFDM-compliant jammer, 
	and a system attacked by a cyclic prefix-violating jammer. The latter two use a rank $B-1$ orthogonal projection 
	for mitigating the jammer.}
	\label{fig:comp_noncomp}
\end{minipage}
\hspace{2mm}
%%%% PLEASE DO NOT CHANGE THE LAYOUT OF OUR FIGURES!!!!
\begin{minipage} {0.31\textwidth}
	\includegraphics[height=4cm]{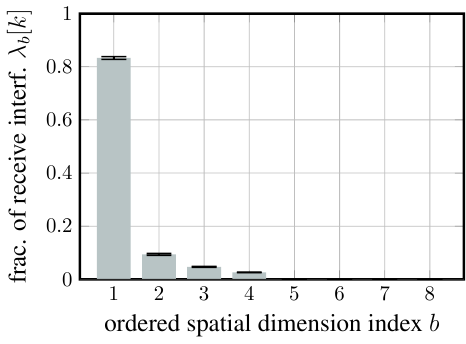}
	\caption{Histogram of the fraction of receive interference  over the different spatial dimensions
	(sorted in descending order of power). The receive interference occupies $L=4$ different
	dimensions. Error bars cover two standard deviations per side.}
	\label{fig:distribution}
\end{minipage}
\hspace{2mm}
%%%% PLEASE DO NOT CHANGE THE LAYOUT OF OUR FIGURES!!!!
\begin{minipage} {0.31\textwidth}
	\includegraphics[height=4cm]{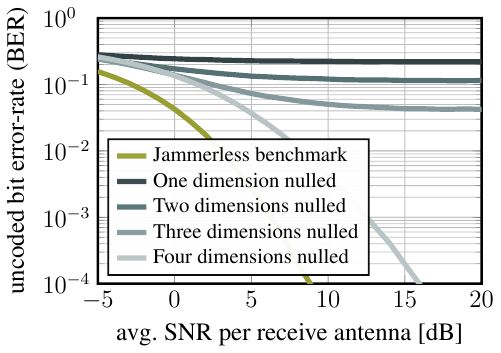}
	\caption{Bit error rate (BER) of a system that is attacked by a cyclic prefix-violating jammer.
	The system uses projections that null different numbers of dimensions (from one to four) of the receive
	interference for jammer mitigation.}
	\label{fig:projectors}
\end{minipage}
\vspace{-2mm}
\end{figure*}

\subsection{Spatial Distribution of the Receive Interference}
That the receive interference of a cyclic prefix-violating jammer occupies an $(L=4)$-dimensional subspace
is {further confirmed} by \fref{fig:distribution}, which shows a normalized histogram of the ordered singular
values of the receive interference (measured across a coherence interval). That is, we denote the receive interference on the
$k$th subcarrier over a coherence interval by
$\hat{{\bY}}_R[k]\in\opC^{B\times M}$ (which consists of $M$ receive vectors) 
and decompose it
using a singular-value decomposition as $\hat{{\bY}}_R[k]=\bU[k]\bSigma[k]\herm{\bV}[k]$, 
where the diagonal elements $\sigma_b[k]$ of $\bSigma[k]$ are sorted in descending order. Then we define
the fraction of receive interference on the $b$th ordered spatial dimension as 
$\lambda_b[k]=\sigma_b[k]/\text{tr}(\bSigma[k])$ for $b=1,\dots,B$. \fref{fig:distribution} shows the mean (plus/minus
two standard deviations) of the distribution 
over these $\lambda_b[k]$ (over all subcarriers $k$ and Monte-Carlo realizations). 
We see that while a single dimension contains a large part of the receive interference, it does not 
contain all of it. As predicted by \fref{thm:main}, the receive interference occupies exactly $L=4$ spatial dimensions.

\subsection{Nulling Different Numbers of Interference Dimensions}
\fref{fig:projectors} shows the performance when a communication system tries to mitigate a
cyclic prefix-violating jammer by nulling different numbers of dimensions (from one to \mbox{$L=4$}) 
of the receive interference using an orthogonal projection before detecting the data using a ZF detector.
In each case, the nulled dimensions are those where the receive interference is strongest.
\fref{fig:projectors} shows that the performance increases with each nulled dimension, as the residual 
jammer interference decreases until, with four nulled dimensions, the interference
is removed completely. However, comparing the performance of this case against the performance of 
nulling a single dimension in the case of an OFDM-compliant jammer (\fref{fig:comp_noncomp}) shows a performance
loss of more than $5$\,dB at a BER of $0.1\%$ (cf. also the large loss compared to the jammerless benchmark). 
This loss comes from the fact that removing the cyclic prefix violating 
jammer comes at the cost of $L=4$ degrees of freedom whereas nulling an OFDM-compliant jammer only costs one degree
of freedom (out of a total of $B=8$ degrees of freedom). 
These results show that, if the jammer's delay spread $L$ is large, jammer mitigation
through linear spatial filtering becomes less effective---if $L$ approaches~$B$, it becomes completely infeasible.

\section{The Way Forward: Some (Im)Possibilities}
This letter has {revealed a difficulty} of MIMO jammer~mitigation in OFDM. {We remark that 
this difficulty is not simply an artifact of OFDM, but is in fact inherent to jammers with 
frequency-selective channels.} Mitigating such a jammer using orthogonal projections in the time-domain would
\emph{also} require nulling of $L$ dimensions (for a single-antenna jammer)---every dimension corresponding to a
tap of the jammer's channel impulse response---at the cost of $L$ degrees of freedom.
{For this reason, one possible way forward for communication 
systems that~pursue jammer resilience based on MIMO processing seems to avoid frequency-selective channels altogether.}
This means that communication has to be
restricted to narrowband channels, either at the cost of achievable data rates, or by using multiple narrowband 
carriers in parallel \cite[Sec. 12.1]{goldsmith05a}.

What this letter has conclusively established is that the common approach of modeling single-antenna jammers with 
frequency-selective channels in OFDM as frequency-flat single-antenna jammers is inaccurate, 
and that the actual high-rank interference reduces the effectiveness of linear spatial filtering.

\vfill 
\linespread{1.01}

% Generated by IEEEtran.bst, version: 1.14 (2015/08/26)

\end{document}